\newcommand{\remove}[1]{}
\newtheorem*{rep@theorem}{\rep@title}
\newcommand{\newreptheorem}[2]{%
\newenvironment{rep#1}[1]{%
 \def\rep@title{#2 \ref{##1}}%
 \begin{rep@theorem}}%
 {\end{rep@theorem}}}
\newtheorem{theorem}{Theorem}
\newtheorem{lemma}{Lemma}
\newcommand{\INDState}[1][1]{\State\hspace{4ex}}
\newwrite\authorbibfile%
	\def\AuthorbibKVMacroScale{#1}%
	\def\AuthorbibKVMacroWraplines{#1}%
	\def\AuthorbibKVMacroImagewidth{#1}%
	\def\AuthorbibKVMacroOverhang{#1}%
	\def\AuthorbibKVMacroImagepos{#1}%
\newlength{\AuthorbibTopSkip}
\newlength{\AuthorbibBottomSkip}
\NewDocumentCommand{\authorbibliography}{+o+m+m+m}{%
	\IfNoValueTF{#1}{%
	}{%
	\setkeys{authorbib}{#1}%
	\immediate\write\authorbibfile{%
		\string\begin{wrapfigure}[\AuthorbibKVMacroWraplines]{\AuthorbibKVMacroImagepos}[\AuthorbibKVMacroOverhang]{\AuthorbibKVMacroImagewidth}^^J
			\string\includegraphics[scale=\AuthorbibKVMacroScale]{#2}^^J
			\string\end{wrapfigure}^^J
	}%
}%
\IfNoValueTF{#3}{%
	\typeout{Warning: No author name}%
}{%
\immediate\write\authorbibfile{%
	\unexpanded{\vspace{\AuthorbibTopSkip}}^^J
	\string\noindent\relax
	\unexpanded{\textbf{#3}\par}^^J
	\string\noindent\relax
	\unexpanded{#4}^^J%
	\unexpanded{\vspace{\AuthorbibBottomSkip}}^^J
}%
}%
}%
\begin{document}

\title{Achieving Dilution without Knowledge of Coordinates in the {SINR} Model}

\author{William K. Moses Jr.\thanks{Department of Computer Science and Engineering, Indian Institute of Technology Madras, Chennai, India. wkmjr3@gmail.com. This work was done while the author was an intern at Xerox Research Centre India, Bangalore.}
\and Shailesh Vaya~\thanks{Conduent Labs India, Bangalore, India. Shailesh.Vaya@conduent.com. This work was done while the author was with Xerox Research Centre India, Bangalore.}}

\date{}
\maketitle

\begin{abstract}
	When multiple radio nodes, situated in an arena, simultaneously transmit some messages, whether a node receives a message or not is determined by a mathematical formula involving the signal strengths of the transmitting nodes and their respective physical coordinates and sometimes even the topology in which they are located. Considerable literature has been developed for various fundamental distributed problems in this SINR (Signal-to-Interference-plus-Noise-Ratio) model for radio transmission. A setting typically studied is when all radio nodes transmit a signal of the same strength, and each device only has access to knowledge about the total number of nodes in the network $n$, the range from which each node's label is taken $[1,\dots,N]$, and the label of the device itself. In addition, an assumption is made that each node also knows its coordinates in the Euclidean plane. In this paper, we create a mechanism which allows algorithm designers to remove that last assumption in the given setting.  The assumption about the unavailability of the knowledge of the physical coordinates of the nodes truly captures the `ad-hoc' nature of wireless networks. From a practical and theoretical perspective, devising solutions for this setting is a very important challenge to undertake.
	
	Previous work in this area uses a flavor of a beautiful technique called dilution, in which the nodes transmit in a (predetermined) round-robin fashion, and are able to reach all their neighbors. However, without knowing the physical coordinates, it's not possible to know the coordinates of their containing (pivotal) grid box and seemingly not possible to use dilution (to coordinate their transmissions). We propose a new technique to achieve dilution without using the knowledge of physical coordinates. This technique exploits the understanding that the transmitting nodes lie in 2-D space, segmented by an appropriate pivotal grid, without explicitly referring to the actual physical coordinates of these nodes. Using this technique, it is possible for every weak device to successfully transmit its message to all of its neighbors in $\Theta(\lg N)$ rounds, as long as the density of transmitting nodes in any physical grid box is bounded by a known constant. This technique, we feel, is an important generic tool for devising practical protocols when physical coordinates of the nodes are not known.
\end{abstract}

\noindent
\textbf{Keywords:} Distributed algorithms,
Signal-to-Interference-plus-Noise-Ratio model, 
Wireless networks, 
Dilution,
Strongly Selective Family,
Strongly Selective Family based Dilution

\section{Introduction}\label{intro}

The SINR (Signal-to-Interference-plus-Noise-Ratio) model for communication for ad-hoc wireless networks has been a theme of intensive, as well as extensive, study in recent years. It intimately models the hardness of communication in wireless networks, by explicitly considering the interference created by concurrently transmitting nodes and the ambient noise. Several of the deterministic protocols developed thus far have assumed that the physical coordinates of the nodes are known. However, this assumption severely restricts the deployment of ad-hoc wireless networks in the real world.

Assessing the physical coordinates and memorizing them is an arduous task (for e.g., applications are earthquake detection, defense and temperature detection in furnaces). Even if the initial locations of sensor nodes are known precisely\footnotemark[3]\footnotetext[3]{Highly accurate GPS with expensive iPhones are accurate only up-to $8$ meters.}, their coordinates change dynamically in most real world applications\footnotemark[4]\footnotetext[4]{Smart home with dozens of such devices working together to ensure that the home runs smoothly.}. Thus, from a very pragmatic deployment perspective, we would want to develop protocols that do not rely on this knowledge.

  In this work, we develop a tool to remove the reliance on this knowledge, assuming \textbf{weak connectivity} for SINR model formulated by Daum et al.~\cite{DGKN13} and further refined by Jurdzi\'nski and Kowalski~\cite{JK16}. We observe that we can successfully work with \textbf{weak devices}, as formulated by Jurdzi\'nski et al.~\cite{JKS12,JKS-ICALP-13}.

\subsection{Related Work}\label{previosresults}
  Previous deterministic protocols, in the domain, extensively use a technique called \textbf{dilution}, introduced by \cite{JKS12,JKS-ICALP-13}, for coordinating the nodes to successfully transmit their messages to their neighbors. This mechanism uses the knowledge of physical coordinates of the nodes to compute the coordinates of the pivotal grid box to which they belong. The coordinates of the grid box are then used to synchronize transmissions of the nodes. Dilution has been used to solve the problems of multi-broadcast \cite{JKS-ICALP-13,RKV15} and backbone creation \cite{JK12,RKV15}.

\subsection{Our Contributions}
 We develop a mechanism to achieve dilution without requiring the knowledge of the physical coordinates of the nodes. We call this new tool \textit{SSF Based Dilution}, as it uses strongly selective families from the ad-hoc wireless radio networks literature \cite{CMS01}. 

 First, consider that a grid $G_x$ is overlayed on the network. We generalize a proposition from \cite{JKS12} in order to show that, for a given node $u$, if we silence all nodes a sufficient distance away from it, all nodes within distance $\sqrt{2}x$ of $u$ receive its message. The resulting lemma is given below.
 
 \begin{replemma}{lem:ssf-dil}
 	For stations with same range $r$, sensitivity $\epsilon > 0$, and transmission power, for each $\alpha > 2$, there exists a constant $d$, which depends only on the parameters $\alpha$, $\beta$, and $\epsilon$ of the model and a constant $k$, satisfying the following property. 
 	
 	Let $W$ be the set of stations such that at most a constant $k$ of them want to transmit in any grid box of the grid $G_x$, $x \leq \frac{r}{\sqrt{2}}$. Let $u$ and $v$ be two stations in different grid boxes such that the distance between them, $\sqrt{2} x$, is the minimum distance between any two stations in different grid boxes in $G_x$. Let $A_u$ be the set of stations in $u$'s grid box.
 	
 	If $u$ is transmitting in a round $t$ and no other station within its box or a box less than $d$ box distance away from its box is transmitting in that round, then $v$ and all stations in $A_u$ can hear the message from $u$ in round $t$.
 \end{replemma}
 
 Our mechanism assumes that, in a given round, each grid box has at most a constant number of transmitting stations. We then use a strongly selective family to stretch this round over $O(\lg n)$ rounds in order to achieve an effect similar to dilution. This is captured in the following theorem.
 
 \begin{reptheorem}{the:ssf-replace}
 	For a grid $G_x$, $x \leq \frac{r}{\sqrt{2}}$, let the set of all nodes that want to transmit satisfy the properties of Lemma~\ref{lem:ssf-dil}. Every node in this set can successfully transmit a message to its neighbors within $\sqrt{2}x$ distance of it in $O(\lg N)$ rounds by executing one $(N,c)$-ssf, where $c = k^2(2d+1)^2$ where $d$ is taken from Lemma~\ref{lem:ssf-dil} and $k$, a constant, is an upper limit on the number of nodes from the set in any box of the grid.
 \end{reptheorem}
 
 Although we never know the actual physical coordinates of the nodes, we work with the \textit{imagination} that the radio nodes are embedded in 2-dimensional space.   The employment of this new mechanism of dilution requires protocols to reduce the setting to one where only a constant number of nodes are selected in each grid box for the purpose of transmission at the same time. 

 By removing the very strong assumption of the nodes knowing their actual physical coordinates, we feel that our result contributes to resolving an important challenge, which is to bridge the rich theoretical world of the SINR model and practical deployment of ad-hoc wireless networks in the real world.

\subsection{Organization of the paper}
  The rest of this paper is organized as follows. Section~\ref{sect:prelims} presents useful preliminaries. Section~\ref{sect:ssf-dil} presents the new technique of the ssf based dilution scheme, exploiting previous grid based frameworks in literature. Section~\ref{sect:conclusions} has our conclusions.

\section{Preliminaries}\label{sect:prelims}

\paragraph{The SINR Model}
  In the SINR model, each wireless station, $u$, has some transmission power $P_u$ which is a positive real number. Let $d(*,*)$ be a distance function which gives the Euclidean metric distance between two stations. For a given round, let $\mathcal{T}$ be the set of all stations which are transmitting in that round. Now, the SINR for a station $u$'s message at station $v$ in that round is defined as follows:\\
$$SINR(u,v,\mathcal{T}) = \dfrac{\frac{P_u}{d(u,v)^{\alpha}}}{\mathcal{N} + \sum\limits_{i=\mathcal{T} \setminus u} \frac{P_i}{d(i,v)^{\alpha}}}.$$
  Here, $\alpha \geq 2$ is a fixed parameter of the model called the path loss constant. Note that we require $\alpha > 2$ for our technique. $\mathcal{N} \geq 0$ is also a fixed parameter of the model and refers to ambient noise. A node $v$ receives $u$'s message if and only if the SINR ratio of $u$'s message at $v$ crosses a threshold $\beta \geq 1$, which is also a fixed parameter of the model: 
  \begin{align}
  SINR(u,v,\mathcal{T}) \geq \beta. \label{eq:sinr-ineq}
  \end{align}

A device which only needs to satisfy Inequality~\ref{eq:sinr-ineq} for its message to be heard is called a \textbf{strong device}. If a device needs to further satisfy the following inequality, 
\begin{align}
\frac{P_u}{d(u,v)^{\alpha}} \geq (1 + \epsilon) \beta \mathcal{N}, \label{eq:weak-device-ineq}
\end{align}
where $\epsilon > 0$ is called the sensitivity parameter of the device, then it is called a \textbf{weak device}. Note that when $\epsilon = 0$, Inequality~\ref{eq:weak-device-ineq} reduces to Inequality~\ref{eq:sinr-ineq} (in the absence of interference). We make the assumption that all devices have the same fixed transmission power $P$.

A node $v$ \textbf{successfully receives} another node $u$'s transmission in a given round if both Inequalities~\ref{eq:sinr-ineq} and~\ref{eq:weak-device-ineq} are satisfied. The above definition is common in the literature, cf.~\cite{JKS12,KV10}. We define a station $u$'s \textbf{transmission range} as the maximum distance away at which another station can still successfully receive a message from $u$ in the absence of other transmitting stations. Since we assume that all nodes have the same power, it follows that they all have the same transmission range, $r$. A node $u$ \textbf{successfully transmits} in a given round if all nodes within range of $u$ successfully receive $u$'s message in the given round. Note that only if a node was a receiver in that round will it actually receive $u$'s message, else the message will be discarded.

  We define a \textbf{communication graph}, denoted by $G(V,E)$, as follows. Each station is considered as a node. If a node $v$ is within range of a node $u$, then we have an edge from $u$ to $v$. Since all nodes have the same range, if there exists an edge from $u$ to $v$, there also exists an edge from $v$ to $u$. We make the assumption that $G$ is connected. This construction of a communication graph based on nodes within range of each other and not just within a fraction of that range is called the \textbf{weak link} version of communication graphs \cite{DGKN13,JK16}. 
  
  Any algorithm prescribed for a station proceeds in a series of rounds, where each round corresponds to one global clock tick. In a given round, a station may act either as a receiver or a transmitter, but not both.
  Nodes do not have the ability to detect collisions. Nodes know the value of $n$, $N$, their own unique label, and a common ($N,c$)-ssf schedule (as defined below). Nodes don't know the value of their coordinates.

\paragraph{Grid and Pivotal Grid}
  The nodes are located on a Euclidean plane. We overlay a 2-dimensional grid $G_x$ on this plane such that the length of each side of a grid box is $x$ and a grid box is denoted by the coordinates of its bottom left coordinates. Therefore, a node with coordinates $(m,n)$ such that $ax \leq m < (a+1)x$ and $bx \leq n < (b+1)x$ has \textbf{grid coordinates} $(a,b)$ with respect to the grid $G_x$.
  The \textbf{pivotal grid} is a grid $G_{\frac{r}{\sqrt{2}}}$, and has been useful in designing algorithms \cite{DP07,EGKPPS09}. If two nodes are within the same grid box in $G_{\frac{r}{\sqrt{2}}}$, then they are within range of each other. The number of nodes located in a given box of the pivotal grid is not bounded.
  The idea of grid boxes within range of a given grid box refers to if a node can be located anywhere within a given grid box, what grid boxes can have nodes within range of that node. For a given grid box with grid coordinates $(a_1, b_1)$, we say another box with coordinates $(a_2, b_2)$ is at \textbf{box-distance} $0$ from it if the two boxes intersect, else at box distance $k$ from it where $k = max( min(|a_1 - a_2 - 1|, |a_2 - a_1 - 1|), min(|b_1 - b_2 - 1|, |b_2 - b_1 - 1|))$.
  
\paragraph{Dilution}
  A grid box is considered \textbf{active} in a given round if the node within that box is allowed to transmit in that round. Dilution, introduced by Jurdzi\'nski and Kowalski~\cite{JK12}, of a grid is the grouping of several grid boxes such that only one box of the group is active in any given round. Each grid box is active once in a cycle of rounds. Formally, a \textbf{$\delta$-dilution} of a grid $G_x$ implies that for a set of nodes, any two nodes with grid coordinates $(a_1,b_1)$ and $(a_2,b_2)$ will be active in the same round iff $(|a_1 - a_2| \mod \delta) = 0$ and $(|b_1 - b_2| \mod \delta) = 0$.
  
\paragraph{Strongly Selective Family}
  Let $N \geq c$ and both $N$ and $c$ be positive integers. An ($N,c$)-strongly selective family, commonly shortened to ($N,c$)-ssf, is a family $F$ of subsets of integers from $[1,N]$ such that for any non-empty integer subset $S$ of $[1,N]$, $|S| \leq c$, for each element $x \in S$, there exists a set $SS \in F$ such that $S \bigcap SS = {x}$. There exist $F$'s of size $O(c^2 \lg N)$\footnotemark[8]\footnotetext[8]{Previously in literature, it has been shown that this is an existential bound to the size of the strongly selective family. Recent work~\cite{BG15}, which is not yet published, shows a possible explicit construction of the desired $(N,c)$-strongly selective family in the form of an $(N, (1,c-1))$ cover free family.} which satisfy the above definition, cf. Clementi et al.~\cite{CMS01}.
  Let $S$ be an ($N,c$)-ssf and let $S_1, S_2, \ldots, S_o$ be sets belonging to $S$. A node \textbf{$u$ executes an ($N,c$)-ssf} when if $u \in S_i$, then in round $i$ of execution the node performs some action (e.g., transmission), and in other rounds it just acts as a receiver. The size of $S$ is $O(c^2 \lg N)$. Here, $c = k^2(2d+1)^2$, where $d$ is a constant and comes from Lemma~\ref{lem:ssf-dil}. For a given grid $G_x$, $x \leq \frac{r}{\sqrt{2}}$, $k$ is an upper limit on the number of nodes that want to transmit, present in any box of $G_x$. By ensuring that $k$ is a constant, the number of rounds of an ($N,c$)-ssf execution is $O(\lg N)$.

\section{Achieving Dilution without Knowledge of Coordinates}\label{sect:ssf-dil}

  Dilution, formally introduced in \cite{JKS12}, helps to achieve the following hard task in SINR networks: Consider the grid $G_x$, $x \leq \frac{r}{\sqrt{2}}$, in which each grid box has at most one node which intends to transmit so that all its neighbors within distance $\sqrt{2} x$ successfully receive the message. If the nodes know their physical coordinates, then they know the coordinates of the grid box they belong to in the grid. Using this knowledge of coordinates of the grid box, dilution disables all nodes belonging to surrounding grid boxes around a node from transmitting, while allowing farther off nodes (limited in number) to transmit simultaneously. The net effect is that this node is able to successfully transmit to all its reachable neighbors. In fact, \cite{JKS12} showed, and extensively used, that all nodes can successfully transmit, by availing dilution, if it is given that there is at most one node that intends to transmit per grid box. The following lemma, called Proposition 11 in \cite{JKS12}, was their main tool to enable the success of dilution for weak devices.
  
\begin{lemma}[Proposition 11 in \cite{JKS12}]\label{lem:jks12-dilution}
For each $\alpha > 2$, there exists a constant $d$, which depends only on the parameters $\epsilon, \beta$ and $\alpha$ of the model, satisfying the following property. Let $W$ be a set of stations such that there is at most one station from $W$ in each box of the grid $G_x$, for some $x \leq \gamma$, and min$_{u,v \in W}\lbrace dist(u,v) \rbrace = x \cdot \sqrt{2}$. If station $u \in C$ for a box $C$ of $G_x$ is transmitting in a round $t$ and no other station in any box $C'$ of $G_x$ in the box-distance at most $d$ from $C$ is transmitting at that round, then $v$ can hear the message from $u$ at round $t$.
\end{lemma}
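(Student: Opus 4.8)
The plan is to show that $v$ satisfies both reception conditions, Inequalities~\ref{eq:sinr-ineq} and~\ref{eq:weak-device-ineq}, once the interfering transmitters are pushed to box-distance more than $d$. The second condition is immediate and independent of $d$: since $u$ and $v$ realize the minimum pairwise distance, $d(u,v) = \sqrt{2}\,x \le \sqrt{2}\cdot\frac{r}{\sqrt{2}} = r$, and because the transmission range $r$ is by definition the largest distance at which a lone transmitter is heard by a weak device, we get $\frac{P}{d(u,v)^\alpha} \ge \frac{P}{r^\alpha} \ge (1+\epsilon)\beta\mathcal{N}$. So the real work goes into Inequality~\ref{eq:sinr-ineq}. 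Writing $S = P/d(u,v)^\alpha$ for the received signal power and $I$ for the aggregate interference at $v$, a short rearrangement shows that $\frac{S}{\mathcal{N}+I}\ge\beta$ holds provided the ratio $I/S$ is below a threshold depending only on $\beta$ and $\epsilon$; concretely, using $\mathcal{N}\le S/((1+\epsilon)\beta)$ from the previous sentence it suffices that $I/S \le \frac{\epsilon}{\beta(1+\epsilon)}$. Thus the entire lemma reduces to making the interference-to-signal ratio small by choosing $d$ large enough.

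Next I would bound $I$ by a ring decomposition around the box $C$. Every transmitter other than $u$ lies in a box at box-distance at least $d+1$ from $C$, and by hypothesis each box of $G_x$ holds at most one station of $W$. For each $j \ge d+1$, the boxes at box-distance exactly $j$ from $C$ form a square annulus containing $O(j)$ boxes, hence at most $O(j)$ interferers. A box at box-distance $j$ is separated from $C$ by at least $jx$ in Euclidean distance; since $v$ lies within $\sqrt{2}\,x$ of $u\in C$, the triangle inequality gives $d(i,v) \ge jx - \sqrt{2}\,x \ge c_1 j x$ for a uniform constant $c_1 > 0$ once $j$ is large. Summing the per-ring contributions yields
\[
 I \;\le\; \sum_{j\ge d+1} O(j)\cdot\frac{P}{(c_1 j x)^\alpha} \;=\; \frac{O(1)\,P}{x^\alpha}\sum_{j\ge d+1} j^{\,1-\alpha}.
\]
Dividing by $S = P/(\sqrt{2}\,x)^\alpha$, both $P$ and $x^\alpha$ cancel, so $I/S$ equals a constant depending only on $\alpha$ times the tail $\sum_{j\ge d+1} j^{1-\alpha}$. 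This scale-invariance is exactly why the threshold $d$ can be chosen uniformly in $x$ and $P$. Because $\alpha > 2$ the exponent satisfies $1-\alpha < -1$, so the series converges and its tail tends to $0$ as $d\to\infty$; hence there is a finite $d$, depending only on $\alpha,\beta,\epsilon$, for which $I/S \le \frac{\epsilon}{\beta(1+\epsilon)}$, establishing Inequality~\ref{eq:sinr-ineq}. (If $\mathcal{N}=0$ the weak-device condition is vacuous and one instead needs $I/S \le 1/\beta$, which the same tail bound delivers.)

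I expect the main obstacle to be the geometric bookkeeping rather than any deep idea: proving the clean $O(j)$ count for the annulus of boxes under the stated box-distance definition, and nailing down a uniform distance lower bound $d(i,v)\ge c_1 j x$ despite $v$ being offset from $C$, so that the additive offset $\sqrt{2}\,x$ is dominated by $jx$ for all $j\ge d+1$. Once these two estimates are in place with constants that do not depend on $x$ or $P$, the convergence of $\sum_{j\ge d+1} j^{1-\alpha}$ for $\alpha>2$ finishes the argument and pins down the constant $d$ claimed in the statement.
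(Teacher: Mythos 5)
Your proof is correct, and its geometric core coincides with the paper's argument (which actually appears as the proof of the generalization, Lemma~\ref{lem:ssf-dil}; this cited lemma is its $k=1$ case): decompose the interferers into square annuli of $O(j)$ boxes at box-distance $j \ge d+1$, lower-bound their distance to $v$ by $\Omega(jx)$, and exploit that $\sum_j j^{1-\alpha}$ converges for $\alpha>2$, so its tail --- scale-invariant in $P$ and $x$ once divided by the received signal strength --- can be driven below any fixed threshold by choosing $d$ large. Where you genuinely diverge is the treatment of ambient noise. The paper carries two alternative sufficient conditions (its Inequalities~\ref{eq:ssf-signal-beats-noise} and~\ref{eq:ssf-sinr-condition}) and resolves them by a case analysis on whether $\mathcal{N}(\sqrt{2}x)^\alpha/P$ is at most $\frac{1}{2\beta}$, obtaining a different admissible bound on its tail constant $c_d$ in each case. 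You instead fold the weak-device slack directly into the SINR inequality: since $\frac{P}{(\sqrt{2}x)^\alpha}\ge(1+\epsilon)\beta\mathcal{N}$, the noise term consumes at most a $\frac{1}{1+\epsilon}$ fraction of the budget, so the single condition $I/S\le\frac{\epsilon}{\beta(1+\epsilon)}$ suffices uniformly over all admissible $\mathcal{N}$, $P$, $x$ --- in effect you noticed that the weak-device inequality bounds the right-hand side of the paper's first condition below by a positive constant, making the case split unnecessary. Your write-up is also slightly more careful on one geometric point: the hypothesis silences boxes around $u$'s box $C$, whereas the paper counts annuli around $v$'s box without comment; your triangle-inequality correction $d(i,v)\ge jx-\sqrt{2}\,x\ge c_1 jx$ handles that offset explicitly. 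Both routes produce a $d$ depending only on $\alpha$, $\beta$, $\epsilon$ (plus $k$ in the generalized statement), so your simplification loses nothing.
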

  
  Dilution is a very powerful tool, but it requires the nodes to know their physical coordinates. From a practical point of view, it's very hard for network installers to assess and record the physical coordinates of a node (and harder still if nodes are not merely in a 2-D Euclidean plane). This explicit knowledge severely restricts the ad-hoc nature of SINR networks. We develop a very interesting and powerful mechanism in this work by achieving dilution, quite surprisingly, without the knowledge of physical coordinates of the nodes. We are able to do this by generalizing Lemma~\ref{lem:jks12-dilution} as follows.

\begin{lemma}\label{lem:ssf-dil}
	For stations with same range $r$, sensitivity $\epsilon > 0$, and transmission power, for each $\alpha > 2$, there exists a constant $d$, which depends only on the parameters $\alpha$, $\beta$, and $\epsilon$ of the model and a constant $k$, satisfying the following property. 
	
	Let $W$ be the set of stations such that at most a constant $k$ of them want to transmit in any grid box of the grid $G_x$, $x \leq \frac{r}{\sqrt{2}}$. Let $u$ and $v$ be two stations in different grid boxes such that the distance between them, $\sqrt{2} x$, is the minimum distance between any two stations in different grid boxes in $G_x$. Let $A_u$ be the set of stations in $u$'s grid box.
	
	If $u$ is transmitting in a round $t$ and no other station within its box or a box less than $d$ box distance away from its box is transmitting in that round, then $v$ and all stations in $A_u$ can hear the message from $u$ in round $t$.
\end{lemma}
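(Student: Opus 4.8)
The plan is to follow the structure of the proof of Lemma~\ref{lem:jks12-dilution} (Proposition 11 of \cite{JKS12}), but to track explicitly the extra factor introduced by allowing up to $k$ transmitters per box, and to check that the argument applies verbatim to receivers lying in $u$'s own box. Fix an arbitrary receiver $w \in \{v\} \cup A_u$ and let $\mathcal{T}$ be the set of stations transmitting in round $t$. I must establish two things for $w$: the SINR inequality $SINR(u,w,\mathcal{T}) \ge \beta$ and the weak-device inequality $\frac{P}{d(u,w)^{\alpha}} \ge (1+\epsilon)\beta\mathcal{N}$, since ``hearing'' means successful reception.

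First I would dispose of the signal term and the weak-device inequality. When $w \in A_u$ the stations $u$ and $w$ share a box of side $x$, so $d(u,w) \le \sqrt{2}x$ (the box diagonal); when $w = v$ the hypothesis fixes $d(u,v) = \sqrt{2}x$ exactly. In either case $d(u,w) \le \sqrt{2}x \le r$ using $x \le r/\sqrt{2}$, so $w$ lies within $u$'s transmission range, and hence by the definition of $r$ the weak-device inequality holds automatically; in fact $\frac{P}{r^{\alpha}} = (1+\epsilon)\beta\mathcal{N}$. The same distance bound yields the signal lower bound $\frac{P}{d(u,w)^{\alpha}} \ge \frac{P}{(\sqrt{2}x)^{\alpha}}$.

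The crux is bounding the interference $I = \sum_{i \in \mathcal{T}\setminus\{u\}} \frac{P}{d(i,w)^{\alpha}}$. By hypothesis every interferer lies in a box at box-distance at least $d$ from $u$'s box, and since $w$'s box ($u$'s box, or the nearby box containing $v$) is within a constant box-distance of $u$'s box, such an interferer is at Euclidean distance $\Omega(jx)$ from $w$ whenever its box is at box-distance $j \ge d$ from $u$'s box. I would group interferers by this value $j$: the grid geometry gives $O(j)$ boxes at box-distance exactly $j$ (a square ring), each holding at most $k$ transmitters, so the contribution of ring $j$ is $O(kj)\cdot \frac{P}{(jx)^{\alpha}}$. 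Summing over $j \ge d$ gives $I \le C\,k\,P\,x^{-\alpha}\sum_{j \ge d} j^{1-\alpha}$ for a universal constant $C$; because $\alpha > 2$ the tail converges and is $O(d^{2-\alpha})$, yielding $I \le C'\,k\,P\,d^{2-\alpha}\,x^{-\alpha}$.

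Finally I would combine the estimates. Writing $SINR(u,w,\mathcal{T}) \ge \frac{P(\sqrt{2}x)^{-\alpha}}{\mathcal{N} + C'kP d^{2-\alpha}x^{-\alpha}}$ and multiplying numerator and denominator by $x^{\alpha}/P$ gives the lower bound $\frac{(\sqrt{2})^{-\alpha}}{\mathcal{N}x^{\alpha}/P + C'kd^{2-\alpha}}$, whose denominator increases in $x$; the worst case is therefore the largest admissible value $x = r/\sqrt{2}$, where the signal equals $P/r^{\alpha} = (1+\epsilon)\beta\mathcal{N}$ and the requirement $SINR \ge \beta$ reduces to the budget $C'k(\sqrt{2})^{\alpha}(1+\epsilon)\beta\,d^{2-\alpha} \le \epsilon$. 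Solving for $d$ gives a finite threshold $d \ge \big(\tfrac{C'(\sqrt{2})^{\alpha}(1+\epsilon)\beta\,k}{\epsilon}\big)^{1/(\alpha-2)}$, a constant depending only on $\alpha,\beta,\epsilon$ and $k$, as claimed. The main obstacle is precisely this interference estimate: unlike the single-transmitter setting of Lemma~\ref{lem:jks12-dilution}, each ring now contributes a factor $k$, which is exactly what forces $d$ to grow with $k$; one must also verify that replacing the single receiver $v$ by an arbitrary station of $A_u$ only shifts the relevant box-distances by a constant and hence leaves the convergent tail unchanged. The role of the sensitivity $\epsilon > 0$ is to supply the strictly positive budget $\epsilon\mathcal{N}$ into which the interference must fit.
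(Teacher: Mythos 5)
Your proof is correct, and its core machinery is the same as the paper's: lower-bound the signal by $P/(\sqrt{2}x)^{\alpha}$, decompose the interferers into square rings of $O(j)$ boxes at box-distance $j \ge d$ carrying at most $k$ transmitters each, and use convergence of $\sum_{j\ge d} j^{1-\alpha}$ (this is where $\alpha>2$ enters) to make the interference arbitrarily small by choosing $d$ large. Where you genuinely diverge is the endgame. The paper splits into two cases according to whether $\mathcal{N}(\sqrt{2}x)^{\alpha}/P$ is at most or greater than $1/(2\beta)$, discharging the first case with the raw SINR condition and the second with the weak-device slack, and concludes only that a sufficiently large $d$ exists. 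You instead observe that the normalized SINR lower bound $(\sqrt{2})^{-\alpha}/\left(\mathcal{N}x^{\alpha}/P + C'kd^{2-\alpha}\right)$ is monotone in $x$, so it suffices to verify the boundary case $x = r/\sqrt{2}$, where $P/r^{\alpha} = (1+\epsilon)\beta\mathcal{N}$ reduces everything to the single interference budget $I \le \epsilon\mathcal{N}$; this yields a unified argument with the explicit threshold $d \ge \left(C'(\sqrt{2})^{\alpha}(1+\epsilon)\beta k/\epsilon\right)^{1/(\alpha-2)}$ rather than the paper's implicit ``$c_d$ small enough'' condition. Your handling of the receivers in $A_u$ is also more careful: you note that any $w \in A_u$ sits within a constant box-distance of $u$'s box, so interferer distances shift by at most a constant and the tail estimate survives, whereas the paper simply asserts that stations in $A_u$ experience less interference than $v$ --- a claim that is not literally true pointwise (an interferer can be closer to a station in $A_u$ than to $v$) but is harmless for the asymptotic argument, exactly as your constant-shift observation makes precise.
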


\begin{proof}
	This proof is along the lines of the proof of Proposition 11 in \cite{JKS12}. Consider two stations $u$ and $v$ satisfying the properties of the lemma where $u$ is trying to send a message to $v$. Note that the following argument goes through regardless of whether $u$ and $v$ are in the same box or in different boxes. We want to show that there exists a value of $d$ for each $\alpha$ such that if no other station within the same box as $u$ nor within box-distance less than $d$ from that box broadcasts, then $v$ successfully hears $u$'s message. We prove this by showing that for large enough $d$ for each $\alpha$, the noise generated by other broadcasting stations is insufficient to disrupt the message from reaching $v$.
	
	The strength of the signal transmitted by $u$ which is received at $v$, in round $t$, is $\frac{P}{(\sqrt{2} x)^\alpha}$.
	
	The number of boxes surrounding the box of $v$ at exactly box-distance $i$ from it is $8(i+1)$. The maximum number of stations that can transmit from those boxes is $8k(i+1)$. According to our assumptions, no stations in a box-distance of less than $d$ from $C$ can transmit in round $t$ while $u$ is transmitting. Therefore, the maximum interference that $v$ faces from other broadcasting stations and ambient noise is
	\begin{center}
		$\mathcal{N} + \sum\limits_{i=d}^\infty 8k(i+1) \cdot \dfrac{P}{\left( ix \right)^\alpha} = \mathcal{N} + \dfrac{8kP}{x^\alpha}\cdot c_d,$
	\end{center}
	where $c_d = \sum\limits_{i=d}^\infty \frac{1}{i^{\alpha - 1}} + \sum\limits_{i=d}^\infty \frac{1}{i^\alpha}$. Note that $c_d$ is the summation of two converging series, both dependent on $d$. $c_d$ can be made arbitrarily small by setting $d$ to a sufficiently large value. In order for $u$'s message to be heard by $v$, Inequality~\ref{eq:sinr-ineq} (SINR inequality) should be satisfied. Thus we have
	\begin{align}\label{eq:ssf-signal-beats-noise}
		(\mathcal{N} + \dfrac{8kP}{x^\alpha} c_d) \beta \leq \dfrac{P}{2^{\frac{\alpha}{2}} x^\alpha}.
	\end{align}
	
	Since $\sqrt{2} x \leq r$ and $r = \left( \frac{P}{(1 + \epsilon)\beta\mathcal{N}} \right)^{\frac{1}{\alpha}}$ (by Inequality~\ref{eq:weak-device-ineq} (weak device inequality) and the definition of $r$), we have $\dfrac{P}{\left( \sqrt{2} x \right)^\alpha} \geq (1 + \epsilon) \beta \mathcal{N}$. Therefore, we have a lower bound on the strength of the signal at $v$ of any node within range of $v$. If Inequality~\ref{eq:sinr-ineq} is satisfied for any such node's signal at $v$, then that automatically ensures that $u$'s message is heard by $v$.  Thus, we also have
	\begin{align}\label{eq:ssf-sinr-condition}
		(\mathcal{N} + \dfrac{8kP}{x^\alpha} c_d) \beta \leq (1 + \epsilon)\beta \mathcal{N}.
	\end{align}
	
	Our goal is for $u$'s message to be heard by $v$. If either Inequality~\ref{eq:ssf-signal-beats-noise} or \ref{eq:ssf-sinr-condition} is met, our goal is achieved. Simplifying Inequalities~\ref{eq:ssf-signal-beats-noise} and \ref{eq:ssf-sinr-condition}, we get the following condition imposed on $c_d$
	\begin{align}\label{eq:ssf-two-conditions}
		c_d \leq \dfrac{\frac{1}{\beta} - \frac{\mathcal{N} \left( \sqrt{2} x \right)^\alpha}{P} }{8 \cdot 2^{\frac{\alpha}{2}} \cdot k} \mbox{ or } c_d \leq \dfrac{ \epsilon \mathcal{N} x^\alpha }{8 k P}.
	\end{align}
	Our goal is to show that at least one of the inequalities is satisfied regardless of the values of $P$, $\mathcal{N}$, or $x$. Now depending on the value of $\frac{\mathcal{N} \left( \sqrt{2} x \right)^\alpha}{P}$, we have two cases to analyze:\\
	Case 1: $\frac{\mathcal{N} \left( \sqrt{2} x \right)^\alpha}{P} \leq \frac{1}{2 \beta}$\\
	The first inequality of Condition~\ref{eq:ssf-two-conditions} is reduced to requiring that $c_d \leq \frac{1}{16 \beta \cdot 2^{\frac{\alpha}{2}} \cdot k}$. For a given $\alpha$, there exists a sufficiently large value of $d$ which depends only on $\alpha$, $\beta$, and $k$ which satisfies the equation.\\
	Case 2: $\frac{\mathcal{N} \left( \sqrt{2} x \right)^\alpha}{P} > \frac{1}{2 \beta}$\\
	The second inequality of Condition~\ref{eq:ssf-two-conditions} is reduced to requiring that $c_d \leq \frac{\epsilon}{16 \beta \cdot 2^{\frac{\alpha}{2}} \cdot k}$. For a given $\alpha$, there exists a sufficiently large value of $d$ which depends only on $\alpha$, $\beta$, $\epsilon$, and $k$ which satisfies the equation.
	
	Thus, $v$ successfully hears $u$'s message in round $t$. Now consider all stations in the set $A_u$. The interference each of them experiences is less than that experienced by $v$. Thus, the above analysis holds good for each station in $A_u$ and each station in $A_u$ successfully hears $u$'s message in round $t$.
\end{proof}

  Using the above lemma, it is possible to achieve results similar to dilution using only an ($N,c$)-strongly selective family with an appropriately chosen constant $c$. For a grid $G_x$, $x \leq \frac{r}{\sqrt{2}}$, with at most one node per grid box who know their locations, $\delta$-dilution, where $\delta=(2d+1)^2$ where $d$ is taken from Proposition 11 in \cite{JKS12}, allows weak devices to successfully transmit their messages to their neighbors within distance $\sqrt{2} x$ of them. For the same grid, with at most \textbf{a constant number of nodes that want to transmit} per grid box who \textbf{don't know their locations}, executing an ($N,c$)-ssf schedule, where $c=k^2(2d+1)^2$, where $d$ is taken from Lemma~\ref{lem:ssf-dil} and $k$ is an upper limit on the number of nodes that want to transmit in any box of the grid, allows weak devices to successfully transmit their messages to their neighbors within distance $\sqrt{2} x$ of them. We call this approach a \emph{strongly selective family based dilution scheme}. 

The following theorem shows how we can replace any dilution scheme with our strongly selective family based dilution scheme when nodes satisfy the properties of Lemma~\ref{lem:ssf-dil}.

\begin{theorem}\label{the:ssf-replace}
For a grid $G_x$, $x \leq \frac{r}{\sqrt{2}}$, let the set of all nodes that want to transmit satisfy the properties of Lemma~\ref{lem:ssf-dil}. Every node in this set can successfully transmit a message to its neighbors within $\sqrt{2}x$ distance of it in $O(\lg N)$ rounds by executing one $(N,c)$-ssf, where $c = k^2(2d+1)^2$ where $d$ is taken from Lemma~\ref{lem:ssf-dil} and $k$, a constant, is an upper limit on the number of nodes from the set in any box of the grid.
\end{theorem}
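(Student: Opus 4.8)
The plan is to use the defining combinatorial property of the $(N,c)$-ssf to manufacture, for every transmitting node, a single round in which it meets the silence hypothesis of Lemma~\ref{lem:ssf-dil}, and then to invoke that lemma. I would identify each set $S_1,\dots,S_o$ of the family with a round of the execution, so that running the whole family means a node $u$ transmits in round $i$ exactly when $u$'s label lies in $S_i$. Fix an arbitrary transmitting node $u$ and let $C$ be its grid box. I would define the \emph{conflict set} $S_u$ to be the set of all transmitting nodes lying in $C$ or in a box at box-distance less than $d$ from $C$; note $u\in S_u$. The argument then splits into two independent claims: that $|S_u|\le c$, and that isolating $u$ inside $S_u$ via the ssf reproduces exactly the hypothesis of Lemma~\ref{lem:ssf-dil}.

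For the size bound I would count the boxes forming the conflict region. Using the box-distance formula, the boxes at box-distance less than $d$ from $C$ are precisely those within Chebyshev distance $d$ of $C$ in grid coordinates, i.e.\ a $(2d+1)\times(2d+1)$ block of exactly $(2d+1)^2$ boxes. Since at most $k$ nodes want to transmit per box, this gives $|S_u|\le k(2d+1)^2\le k^2(2d+1)^2=c$ (using $k\ge 1$), so $S_u$ is an admissible subset for the $(N,c)$-ssf.

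Next I would apply the ssf property to $S=S_u$ with distinguished element $u$: there is a set $S_i$ in the family with $S_i\cap S_u=\{u\}$. In the corresponding round $i$, node $u$ transmits while every other member of $S_u$ — that is, every transmitting node in $C$ or within box-distance less than $d$ of $C$ — is silent. Nodes outside $S_u$ may transmit, but they sit at box-distance at least $d$ and are exactly the far-off stations the lemma tolerates. This is precisely the hypothesis of Lemma~\ref{lem:ssf-dil}, so in round $i$ every station in $A_u$, and any station $v$ in a different box at distance $\sqrt{2}x$, successfully hears $u$. Because the minimum distance between stations in different boxes is $\sqrt{2}x$, every neighbor of $u$ within $\sqrt{2}x$ either lies in $A_u$ or is at distance exactly $\sqrt{2}x$ in a different box and thus plays the role of $v$; the uniform interference bound in the proof of Lemma~\ref{lem:ssf-dil} covers both, so all of $u$'s targeted neighbors receive the message in round $i$. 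Since $u$ was arbitrary, every transmitting node obtains its own isolating round within a single pass over the family, and all nodes transmit successfully. Finally, the family has size $O(c^2\lg N)$, and $c=k^2(2d+1)^2$ is a constant because $k$ and $d$ are, yielding the claimed $O(\lg N)$ round bound.

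The main obstacle I anticipate is not a deep inequality — Lemma~\ref{lem:ssf-dil} already carries the entire SINR analysis — but the careful translation between the purely combinatorial isolation guaranteed by the ssf (defined over labels in $[1,N]$) and the geometric silence condition demanded by the lemma. The crux is checking that $S_u$ is simultaneously \emph{large enough} to contain every node that could interfere from within box-distance $d$ and \emph{small enough} ($\le c$) for the ssf guarantee to apply; getting the box count right and observing that transmitters outside $S_u$ are automatically far enough to be harmless is where the reasoning must be watertight.
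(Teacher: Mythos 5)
Your proposal is correct and follows essentially the same route as the paper's own proof: count the at most $(2d+1)^2$ boxes that must be silenced, bound the number of conflicting transmitters by $k(2d+1)^2\le k^2(2d+1)^2=c$, invoke the $(N,c)$-ssf property to give each node an isolating round satisfying the hypothesis of Lemma~\ref{lem:ssf-dil}, and conclude with the $O(c^2\lg N)=O(\lg N)$ size bound. Your write-up is in fact more careful than the paper's (explicitly defining the conflict set, noting that transmitters outside it are tolerated by the lemma, and justifying the slack between $k(2d+1)^2$ and the stated $c=k^2(2d+1)^2$), but the underlying argument is identical.
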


\begin{proof}
From Lemma~\ref{lem:ssf-dil}, we get a $d$ value such that for a given node in box $C$, if all nodes within its box and all boxes less than box distance $d$ from it are silenced, that node will successfully transmit its message to its neighbors within $\sqrt{2} x$ distance from it. The number of boxes that need to be silenced is upper bounded by $(2d+1)^2$. Each of these boxes contain at most $k$ nodes that want to transmit. An ($N,c$)-ssf guarantees that for any set of size $c$ taken from $[N]$, there exists a subset such that only one element of those $c$ is chosen. Set $c = k^2(2d+1)^2$ and execute one ($N,c$)-ssf. For every node, there exists one round in the ($N,c$)-ssf such that all other nodes in its box and surrounding boxes are silenced and it can successfully transmit to all its neighbors within $\sqrt{2} x$ distance from it.
\end{proof}

  Using Theorem~\ref{the:ssf-replace}, it becomes possible for nodes to achieve results similar to dilution when they do not have knowledge of their grid coordinates and they satisfy the conditions of Lemma~\ref{lem:ssf-dil}.

 The following are some other salient features of the mechanism we feel we should highlight:
\begin{enumerate}
\item{} Unlike the mechanism in \cite{JKS12}, which was able to guarantee successful transmission to only a single node per box of a grid, we are able to guarantee successful transmission to as many nodes per box of the grid as needed, as long as this number is upper bounded by a known constant. Note that the restriction from Lemma~\ref{lem:ssf-dil} is on the number of nodes which are transmitting from each box. This is not the same as the number of nodes actually present in that box. By using a smart silencing mechanism and repeated use of an ($N,c$)-ssf, we can achieve dilution for larger grids when the number of nodes in a grid box is unbounded. The algorithm \emph{Multi-Broadcast} in \cite{MV16} is an example of such a process.
\item{} Our scheme does not require nodes to know their physical coordinates in the system, unlike \cite{JKS12}.
\item{} The identities of other nodes belonging to the same box of the grid, wanting to transmit, are not known to the other nodes in their box.
\item{} While it is guaranteed that there will be a round in which each node will be able to successfully transmit (as long as the constant density condition is met), it is not explicitly known which round it will be.
\item{} The new mechanism is somewhat \textit{fuzzy}, in the sense that while there is a round in which the (chosen) node will definitely transmit and be heard by all its neighbors, there may exist additional rounds in which the node transmits and is heard by a partial subset of its neighbors.
\end{enumerate}

\subsection{Example Applications of SSF Based Dilution}
The following are some examples of how our scheme may be applied.
\begin{enumerate}
	\item Consider the \emph{LeaderElection} algorithm in \cite{JKS12}. Lines 15 and 16 of that algorithm require the use of $\delta$-dilution for some constant $\delta$ and require nodes to know their locations. We can drop that assumption for that part of the algorithm and have nodes perform an ($N,c$)-ssf based dilution to achieve the same result with increase in running time by a factor of $O(\lg n)$ for lines 15 and 16.
	\item The wakeup problem specifies that $1 \leq k < n$ nodes are initially awake and we want to wake up the remaining nodes. The multi-broadcast problem requires that when $1 \leq k \leq n$ nodes each have a message, those messages be transmitted to all other nodes in the network. It is possible to develop algorithms to solve these two fundamental problems using ssf based dilution as seen in \cite{MV16}.
	\item A backbone is a constant degree connected dominating set (CDS) with asymptotically the same diameter as the network and such that every node in the network is connected to $\geq 1$ node in the backbone. Once a backbone is created on top of a network, it can be used to solve other communication problems such as routing using simple distributed algorithms related to the backbone. For the setting described in this paper, \cite{MV16} solve this problem by extensively using ssf based dilution. For the given setting, with the additional assumption that nodes know the labels of their neighbors, \cite{KMV16} solve this problem, again making extensive use of ssf based dilution. 
\end{enumerate}

\section{Conclusions}\label{sect:conclusions}

  The important contribution of this work lies in devising a general purpose technique that achieves dilution (developed in \cite{JKS12}) without using knowledge of the actual physical coordinates of the wireless devices. Of course, we achieve our results by exploiting the understanding that the nodes are deployed in a reasonable metric and have associated with them actual physical coordinates but don't need to explicitly refer to them. This strongly selective families based notion is referred to as SSF Based Dilution. Potential applications of this technique include using it to realize communication tasks such as routing, multicast and group communication, and graph-kind of algorithms for clustering, coloring, MIS, CDS, etc. for SINR networks for consideration in the future.

\textbf{Acknowledgements:} We are very grateful to Darek Kowalski for several enriching discussions at various stages of this work.

\bibliographystyle{abbrv}
\bibliography{reference}

\end{document}